\documentclass[]{article}
\usepackage{fullpage}
\usepackage{amsmath}
\usepackage{amsthm}
\usepackage{graphicx}
\usepackage[round]{natbib} % omit 'round' option if you prefer square brackets
\usepackage[colorlinks=true,citecolor=blue,linkcolor=black,urlcolor=blue]{hyperref}
\bibliographystyle{plainnat}
\usepackage{multirow}
\usepackage{float,epsf,caption,subcaption}
\usepackage{longtable}
\usepackage{tabularx}
\usepackage{authblk}
\usepackage{amsfonts}

\newtheorem{theorem}{Theorem}[section]

\newtheorem{definition}[theorem]{Definition}

% \theoremstyle{remark}

%opening
\title{Trolley Optimisation for Loading Printed Circuit Board Components}

\author[1,2]{Vinod Kumar Chauhan\thanks{vinod.kumar@eng.ox.ac.uk (corresponding author). Accepted to \textbf{Operations Research Forum} (Springer Nature).}}
\author[3]{Mark Bass\thanks{Mark.Bass@Rolls-Royce.com}}
\author[1]{Ajith Kumar Parlikad\thanks{aknp2@cam.ac.uk}}
\author[1]{Alexandra~Brintrup\thanks{ab702@cam.ac.uk}}
\affil[1]{Department of Engineering, University of Cambridge UK}
\affil[2]{Department of Engineering Science, University of Oxford UK}
\affil[3]{Rolls-Royce}

\begin{document}

	\maketitle
	
\begin{abstract}
    A trolley is a container for loading printed circuit board (PCB) components and a trolley optimisation problem (TOP) is an assignment of PCB components to trolleys for use in the production of a set of PCBs in an assembly line.
    In this paper, we introduce the TOP, a novel operation research application. To formulate the TOP, we derive a novel extension of the bin packing problem. We exploit the problem structure to decompose the TOP into two smaller, identical and independent problems. Further, we develop a mixed integer linear programming model to solve the TOP and prove that the TOP is an NP-complete problem.
    A case study of an aerospace manufacturing company is used to illustrate the TOP which successfully automated the manual process in the company and resulted in significant cost reductions and flexibility in the building process. 

\textbf{Keywords:} manufacturing; printed circuit board; trolley optimisation; the bin packing problem; mixed integer linear programming.

\end{abstract}
	
\section{Introduction}
\label{sec_intro}

Printed circuit boards (PCBs) are very important electronic components due to their usage in manufacturing electrical and electronic devices ranging from remote controls, mobile phones, laptops and televisions to train and aircraft engines etc. The problems related to PCBs have been studied extensively because of their importance, e.g., \cite{drezner1984optimizing}. Due to the growing need for PCBs, PCB manufacturers use automated and complex manufacturing systems, like collect-and-place (CAP) machines, (\cite{lin2015demand}). Typically, manufacturers have assembly shops with multiple assembly lines, each may have multiple CAP machines. So, automated setups are needed to efficiently utilise the machines which are otherwise very expensive (\cite{wu2009scheduling}).

PCB assembly planning is a multi-level optimisation problem which consists of several interdependent problems (refer to Figs.~2~and~3 in \cite{mumtaz2019multi} and \cite{ji2001planning}). Each of the problems in the PCB assembly planning is an NP-hard problem. The complexity of these problems is exacerbated by their large scale, involving a large number of components and a diverse range of PCBs, making it challenging to comprehensively address all sub-problems and achieve an integrated solution for optimal outcomes (\cite{gao2021hierarchical}). In fact, despite being studied for several decades (\cite{drezner1984optimizing,ahmadi1988component}), even individual machine-level problems are solved approximately using heuristic methods (\cite{li2018clustering}).

A trolley and a stacker are containers used to hold components, such as registers, capacitors and diodes, required to build PCBs. Instead of plugging individual components directly into CAP machines, i.e., assigning individual components to machine slots (\cite{gao2021hierarchical}), components are loaded onto trolleys and stackers which then are plugged into the CAP machines. This helps to manage components, which are large in number and also helps to change components while building a variety of PCBs. So, the TOP is a problem of finding a minimum number of trolleys and stackers with common capacities required to load a given set of PCB components of different sizes to build a given set of PCBs in an assembly line with multiple machines. {\color{blue}Key terms used in defining the TOP are defined in Table~\ref{tab_terminology}.}

\begin{table}[htb!]
\caption{Application specific terminology}
\label{tab_terminology}
\centering
    \begin{tabular}{m{3.5cm}m{11cm}}
    \hline
    \textbf{Term}          & \textbf{Definition}   \\ \hline
    PCB                    & A Printed Circuit Board (PCB) is a crucial electronic component used in making various electrical devices, such as remote controls, mobile phones, laptops, and televisions, as well as in larger systems like train and aircraft engines. \\ \hline 
    Component              & These are the basic units required to manufacture PCBs, e.g., registers, capacitors and diodes.                                                                                                                                            \\ \hline 
    Trolley                & A trolley is a container for loading printed circuit board (PCB) components, having 33 slots of equal size to plug components.                                                                                                         \\ \hline 
    Stacker                & A stacker (also called auto-sequencer) is a container for loading large PCB components, having 30 slots of equal size to plug components. \\ \hline  
    TOP   & TOP is a problem of finding a minimum number of trolleys and stackers with common capacities required to load a given set of PCB components of different sizes to build a given set of PCBs in an assembly line with multiple machines. \\ \hline 
    CAP Machine            & A collect-and-place (CAP) machine is a type of automated equipment used in manufacturing and assembly processes. It is designed to pick up components from a trolley/stacker and place them onto a PCB. \\ \hline 
    Assembly line          & An assembly line is a series of CAP machines where PCBs are sequentially assembled as they progress through each machine in the line. \\ \hline 
    Assembly shop          & An assembly shop is a facility to manufacture PCBs. It often includes assembly lines, testing stations, and quality control areas.  \\ \hline 
    Assembly line capacity & It is the maximum number of containers (trolleys or stackers) that can be used in an assembly line. It is referred to as the maximum trolley limit in the model development. \\ \hline                            
    \end{tabular}
\end{table}

To the best of our knowledge, this is the first study of the TOP as the existing literature assumes that components are directly plugged into slots on CAP machines (\cite{castellani2019printed,gao2021hierarchical}). In practice, PCB manufacturers use trolleys to load components which otherwise could be difficult to manage and switch while building a variety of PCBs on an assembly line. If trolleys are prepared from direct assignment of components to slots of CAP machines, in that case, the trolley loading won't be efficient, as that may need to switch multiple trolleys between different PCBs. The TOP is an important problem, especially for low-volume and high-mix problems, due to the frequent need to switch PCBs.

To formulate the TOP, we extend the bin packing problem (BPP) which finds a minimum number of bins of common capacity to pack a given set of items of different weights (\cite{wascher2007improved}). The TOP shares constraints similar to the BPP, with additional constraints (for details refer to Subsection~\ref{subsec_problem_formulation}) to ensure that the number of trolleys needed to build each PCB does not exceed the capacity of the assembly line otherwise the problem will be either infeasible or will need to change trolleys while building a PCB. We also exploit the problem structure to decompose the TOP into two smaller, identical and independent problems, i.e., assignment of components to trolleys and assignment of components to stackers, by pre-computing the dependency between both problems. Further, we develop a single and a smaller mixed integer linear programming (MILP) model to solve both problems. The exact optimisation-based methods are used to successfully solve the resulting MILP model.

This paper also presents a case study of an aerospace manufacturing company. The company has an assembly shop with three assembly lines with one, two and three collect-and-place (CAP) machines, respectively, to meet the PCB demands of all its products. Due to the complexity associated with the PCB planning (\cite{koskinen2020rolling}) and due to the customised needs of the company, the company can't use the CAP machine software to perform the setups. Currently, experienced managers handle all setups, including the TOP. The TOP is the most time-consuming setup task and takes eight weeks to perform the setups. The longer setup time results in a loss of flexibility in the build process to introduce new PCBs and to deal with unprecedented situations, like COVID-19, which might need rescheduling production due to changed industrial requirements.

The contributions of the study are summarised as given below.
\begin{itemize}
	
    \item A novel trolley optimisation problem (TOP) is introduced which finds a minimum number of trolleys and stackers of common capacities to load a given set of PCB components of different sizes, required to manufacture a set of PCBs in an assembly line.
    
    \item A novel extension of the BPP is derived to formulate the TOP by introducing additional constraints to ensure that the number of trolleys required to build each PCB is less than or equal to the capacity of the assembly line used for building the PCB. An MILP model is developed to solve the TOP which is solved using exact optimisation methods.
    
    \item The problem structure is exploited to decompose the TOP into two smaller, identical and independent problems, i.e., assignment of trolleys and assignment of stackers, by pre-computing the dependency between them. So, a single and smaller MILP model is sufficient to solve both the problems and, hence, to solve the TOP (for details refer to Subsection~\ref{subsec_problem} and \ref{subsec_solution_approach}).
    
    \item A theoretical proof is provided to prove that the TOP is an NP-complete.
    
    \item An industrial case study of a company is presented that currently uses a manual setup for the TOP and suffers from long setup times and lack of flexibility to introduce new products and to respond to unprecedented situations, like COVID-19. The proposed model helps the company to automate the TOP and get rid of issues caused by the manual process. Moreover, the model is deployed in the company and resulted in significant cost reductions through automation of the TOP, in addition to introducing flexibility in the building process.
	
\end{itemize}

The rest of this paper is structured as follows. In Section~\ref{sec_literature}, we present related work on PCB optimisation and the BPP to situate the context of our contribution. In Section~\ref{sec_case_study}, we characterise the TOP, formulate it using a case study of a company, present comparative results against the existing benchmark, and finally derive a proof in Section~\ref{sec_np_complete} to show that the TOP is NP-complete. We then conclude and discuss implications and managerial insights in Section~\ref{sec_conclusion}.

\section{Related work}
\label{sec_literature}
In this section, we present a brief literature on the BPP and the problems related to PCBs.

\subsection{The bin packing problems}
\label{subsec_bin_packing}

The bin packing problems (BPPs) are one of the most popular NP-hard combinatorial problems, which have been studied since the thirties (\cite{kantorovich1960mathematical}). A large number of variants of the BPP have been derived with a variety of applications across different areas, like the transportation industry, for example, delivery of parcels, pallet loading, container loading, cutting stock problem and large scale integration placement problem (\cite{delorme2018bpplib}).

According to the dimensions, the problem has one-, two- and multi-dimensional variants. For example, one, two and three-dimensional cutting stock problems, which minimise the wastage of material by cutting the stock material, like paper/fabric roll and metal sheet, into pieces of specified sizes (\cite{martello2000three}).

In terms of objectives, the BPPs can be single- and multi-objective problems. For example, \cite{aydin2020multi} applied the multi-objective BPP to cloud computing to tackle the virtual machine placement problem. They considered two objectives minimising the number of servers and the number of fire-ups of idle servers. Similarly, depending on the nature of the problem, the BPP can be categorised as linear, non-linear, constraint programming and integer programming problems (\cite{leao2020irregular}).

For solving the BPPs, a variety of algorithms have been studied, like exact methods and heuristic methods (\cite{delorme2016bin}). The heuristics can be further categorised as approximation algorithms, meta-heuristics and hyper-heuristics. The approximation algorithms, like first fit, next fit and worst fit, do not provide optimal solutions but provide guaranteed bounds on the solution. The meta-heuristics, like ant colony optimisation, particle swarm optimisation and genetic algorithms etc., also do not provide any guarantees on optimality but can provide quicker solutions to complex problems. The hyper-heuristics combine different heuristic techniques to solve complex problems (\cite{burke2013hyper}).

Please refer to \cite{wascher2007improved} for the BPP typology, \cite{delorme2018bpplib} for a library of different BPP and \cite{delorme2016bin} for a detailed review of mathematical models and exact algorithms to solve the BPP.

\subsection{PCB related problems}
\label{subsec_pcb}

PCB production/assembly planning is related with the configuration of different assembly lines and CAP machines in an assembly shop for scheduling a given demand of PCBs. This is, mainly, a three-level optimisation problem, involving machine-level, assembly line-level and shop-level tasks (refer to Figs.~2~and~3 in \cite{mumtaz2019multi} and \cite{ji2001planning}). Shop-level involves problems related to the entire assembly shop, like PCB assignments to different assembly lines. Assembly line-level considers all machines in an assembly line and involves problems, like job sequencing, where the sequence of PCB manufacturing is decided, and line balancing, where components are allocated to different machines of an assembly line. Machine-level involves problems related to a single machine like placement sequence of components, assignment and sequence of nozzles, nozzle change schedule, and assignment of components to feeder slots. Machine-level problems are dependent on the type of machine used, like single-head or multi-head gantry machines, and need their own optimisation solutions thereby increasing the complexity. Please refer to \cite{ji2001planning} for a review of different PCB-related optimisation problems.

All the sub-problems, relating to the three levels, are interdependent NP-hard large-scale problems (because of a large number of parts to be placed on the boards and a large variety of PCBs) which makes it difficult to find the optimal solution for the integrated problem. In fact, despite being studied for several decades (\cite{drezner1984optimizing,ahmadi1988component}), even individual machine-level problems are solved approximately using heuristic methods. For example, \cite{li2018clustering} solved machine-level tasks using clustering based heuristic method, assuming line and shop-level problems are already solved. Due to the complexity, these sub-problems are either solved separately or some of them are solved together, and use a variety of approaches to solving them, e.g., simulation (\cite{stayer2011simulation}), heuristic (\cite{koskinen2020rolling}), meta-heuristic (\cite{qin2019two} and approximation methods (\cite{ellis2002optimization}).

Existing literature, mainly, considers component loading problems as feeder slot assignment problems at the machine level, which assigns components to slots on the machine. The problem assumes that the number of feeder slots is sufficient to load different components which might not be true while working with a variety of PCBs. Here, we discuss some of the recent work to deal with the component loading problem. \cite{castellani2019printed} studied a moving-board-with-time-delay type machine for minimisation of assembly time of PCBs, considering machine-level sub-problems of feeder slot assignments and placement sequences. They applied the Bees Algorithm to solve the problem. \cite{mumtaz2019multi} studied sub-problems of assembly line assignment to PCBs, component allocation to machines and component placement sequencing and solved the model using a hybrid spider monkey optimisation algorithm. \cite{koskinen2020rolling} considered sub-problems of PCB allocation to assembly lines, load balancing of machines of a line and job scheduling and developed a two-phase heuristic to solve the problem. \cite{gao2021hierarchical} considered a beam-head type machine to optimise machine-level sub-problems of nozzle assignment, feeder slot assignments and placement sequence optimisation. They developed a hierarchical heuristic to solve the sub-problems in descending order of their importance.

In practice, PCB manufacturers use trolleys to load components which otherwise could be difficult to manage and switch while building a variety of PCBs in an assembly line. If trolleys are prepared from direct assignment of components to slots of CAP machines, in that case, the trolley loading won't be efficient, as that may need to switch multiple trolleys between different PCBs. Thus, the TOP is a very important problem, especially for low-volume and high-mix problems, due to the frequent need to switch PCBs.

From this brief literature review, it is clear that the TOP -- a novel extension of the BPP -- to the best of our knowledge is not studied as such in the PCB literature but taken care of indirectly through slot assignment problem which does not provide flexibility of managing a large number of components, especially in problems with high-mix, i.e., a large variety of PCBs.

\section{Trolley optimisation: a case study}
\label{sec_case_study}
In this section, we characterise the problem, develop a mathematical formulation and present comparative results against the existing benchmark.

\subsection{Problem definition}
\label{subsec_problem}
An original equipment manufacturer in aerospace technology has a PCB assembly shop to meet the demands of all its products. The assembly shop has three assembly lines with one, two and three collect-and-place (CAP) machines, respectively. This requires the assembly shop to operate in a low-volume and high-mix setting and in a unique way. So, first, due to the complexity of the problem and second, due to the unique operation of the assembly shop, the company can't use the CAP software for PCB planning and optimisation. So, except for the machine-level tasks, like placement sequences and nozzle changes etc., all other PCB planning tasks, like the TOP are performed by an experienced manager which takes more than eight weeks each time an assembly line needs to be set up. 

Due to a long manual setup time, the company lacks the flexibility to introduce new PCBs in the build process as well as in responding to unprecedented situations, like COVID-19, for which the company may have to reschedule the build process according to the changed demands. Moreover, the manual setup also leads to a suboptimal solution, requiring more trolleys to load PCB components which are difficult to manage. To address these issues, we automate the TOP in this case study, as discussed below.

The company builds $P$ different types of PCBs using $C$ different types of components. The components can be categorised into two categories, (i) the components which need one to five slots on a container for loading different PCB components onto the CAP machine, and are put on the trolleys, and (ii) the components needing more than five slots on the container and are loaded onto the stackers. The stackers are used for larger components due to placement efficiency. So, the components are loaded onto trolleys and stackers which are then loaded onto CAP machines. Thus, instead of loading individual components directly onto the CAP machine, which could be a very complex process, trolleys and stackers are loaded onto the CAP machine. This makes the task of loading different components for different PCBs easier and more manageable during the operation of the assembly shop (see Fig.~\ref{fig_aerospace} for an example of trolleys). Moreover, each component in an assembly line is loaded onto only one trolley/stacker for easy traceability. Each assembly line has a predefined capacity for the number of trolleys and stackers that it can accommodate. Additionally, it's important to note that both trolleys and stackers occupy an equal amount of space within the assembly line configuration.

\begin{figure*}[htb!]
	\centering
	\begin{subfigure}{0.5\textwidth}
		\centering
		\includegraphics[width=\textwidth]{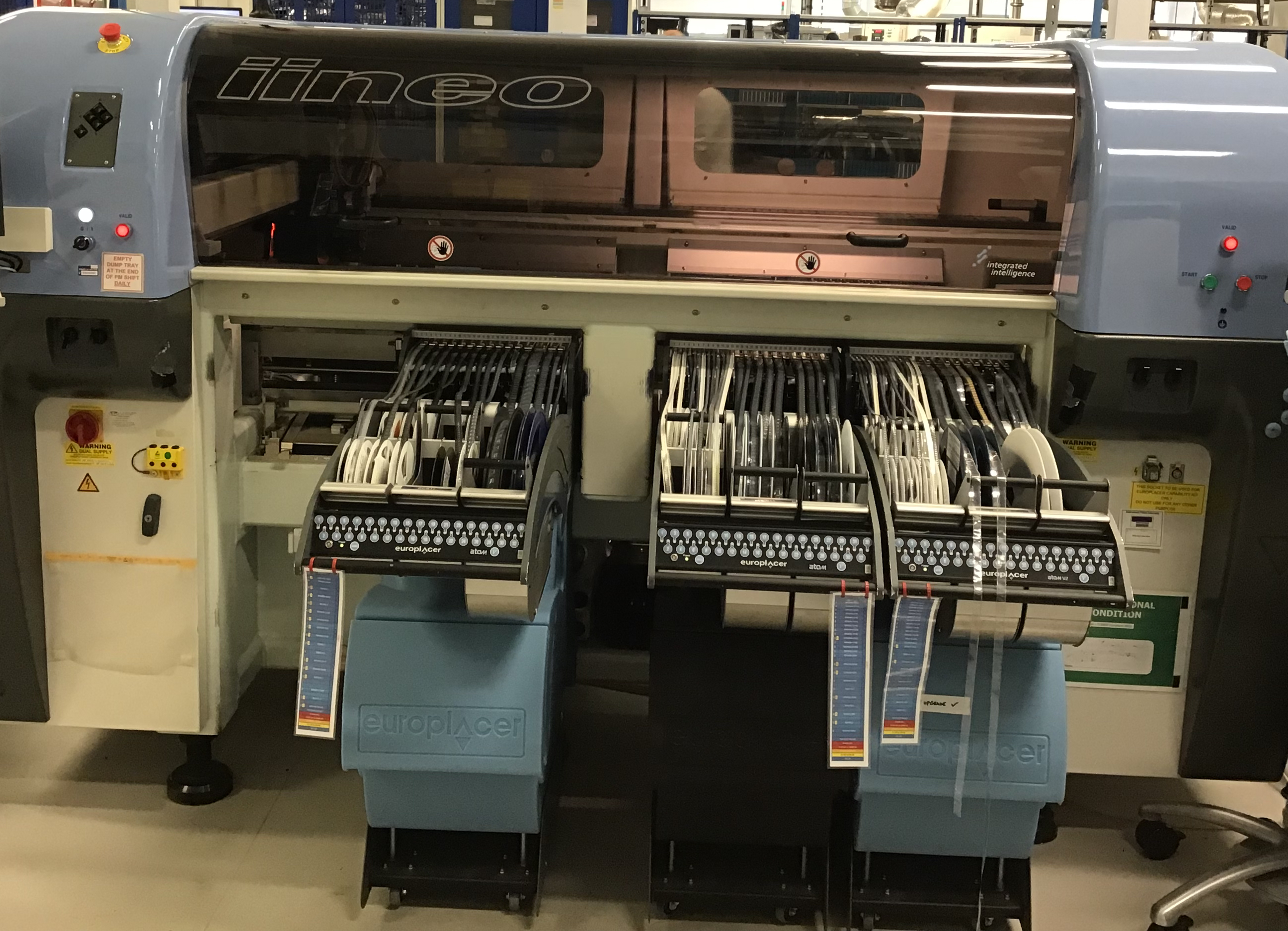}
		\caption{An example of a CAP machine where one side is shown with three of the four places occupied by trolleys.}
		\label{subfig_CAP}
	\end{subfigure}%
	~~
	\begin{subfigure}{0.5\textwidth}
		\centering
		\includegraphics[width=\textwidth]{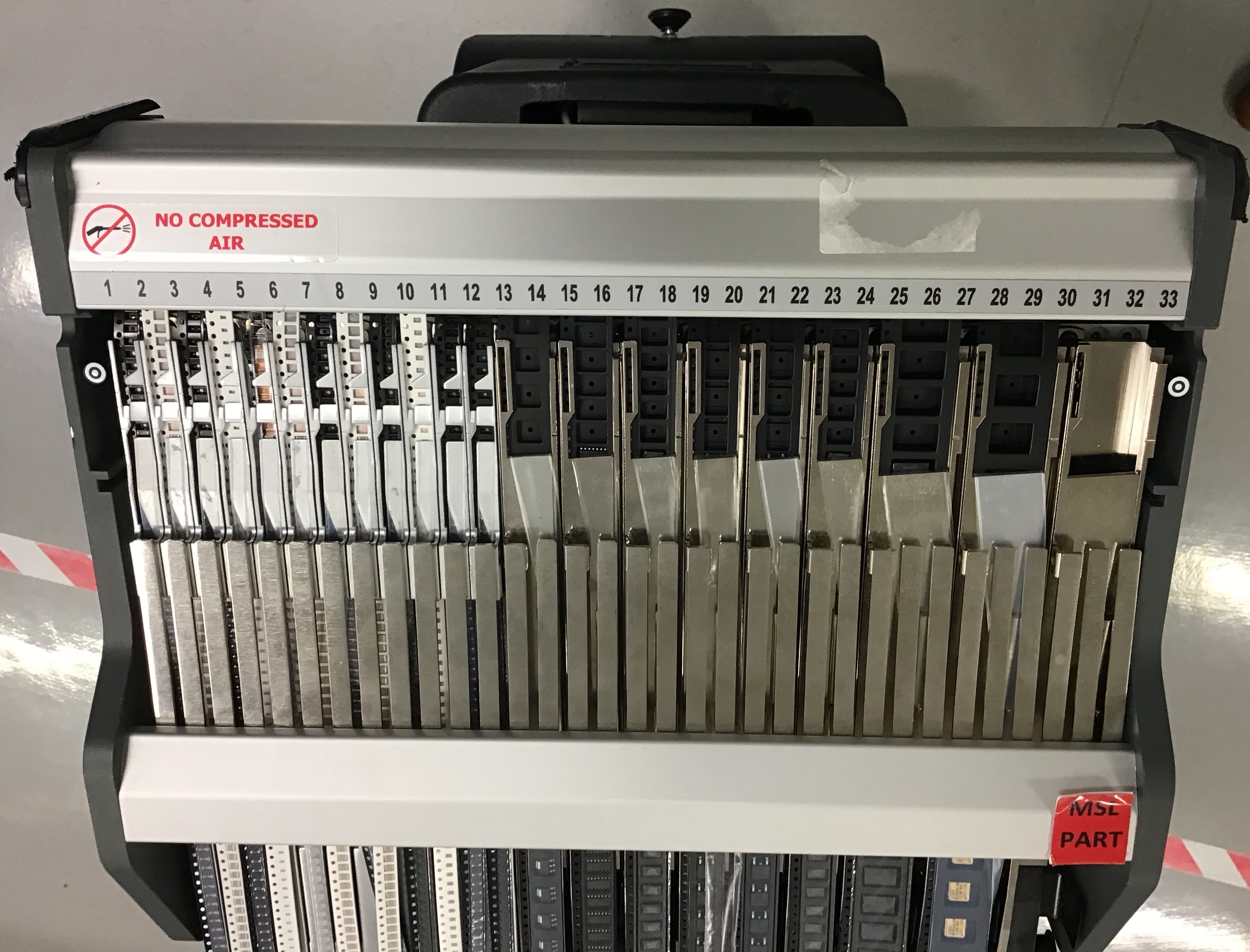}
		\caption{Slots of a trolley where components are plugged and each component can take one to five slots depending on the size.}
		\label{subfig_trolley}
	\end{subfigure}	
	\caption{An example of trolleys, slots on the trolley, and a CAP machine (pictures are shared by our industrial partner).}
	\label{fig_aerospace}
\end{figure*}

Thus, the TOP aims to find a minimum number of trolleys and stackers of common capacities to load a given set of PCB components of varying sizes (measured by the number of slots required) to build a set of PCBs in an assembly line, subject to the constraint that the total number of trolleys and stackers used for each PCB must not exceed the assembly line capacity (referred as maximum trolley limit). The assignment of components to trolleys and to stackers are exactly similar and are solely dependent on the assembly line capacity. For instance, suppose a PCB is built on an assembly line with a capacity of 16 trolleys. If a PCB requires two stackers, the remaining components must be loaded onto 14 trolleys, as two spaces are occupied by stackers (a stacker and a trolley occupy equal space on the assembly line). This information can be determined in advance based on the PCBs to be manufactured as we can precompute if we need stackers. Leveraging this structural characteristic, the problem is decomposed into two smaller and independent sub-problems: trolley assignment and stacker assignment. Consequently, a single, smaller MILP model is developed to address both assignments. For simplicity, we primarily focus on the trolley assignment (refer to Subsection \ref{subsec_problem_formulation}), as the stacker assignment is trivial due to the manageable scale of the problem.

In the development of a model for the problem, we have made the following assumptions.
\begin{itemize}
    \item Capacity of each trolley is the same, i.e., 33 slots and that of each stacker is 30 slots.
    \item Each component is loaded onto only one trolley/stacker, for easy traceability, in an assembly line.
    \item Different trolley loading setups do not make a significant difference in the operation of the assembly line and are not relevant to this problem.
\end{itemize}

\subsection{Problem formulation}
\label{subsec_problem_formulation}
The objective of the TOP is to minimise the number of trolleys needed to load all the PCB components. This is similar to the BPP with an additional constraint that all components needed for each PCB must fit on trolleys less than or equal to the capacity of an assembly line, e.g., if a PCB is to be built on an assembly line with two CAP machines, which can take 16 trolleys, then all PCB components needed for that PCB must be loaded onto less than or equal to 16 trolleys otherwise the problem will be infeasible.

Mathematical notations for the development of the model are defined in Table~\ref{tab_notations} and domain-specific terminology is defined in Table~\ref{tab_terminology}.

\begin{table}[htb!]
	%	\begin{longtable}[H]{ll}
		\centering
		\caption{Notation descriptions}
		\label{tab_notations}
		%		\hline
		\begin{tabular}{ll}
			\hline
			\textbf{Symbols} & \textbf{Meaning} \\
			\hline
			%		\endhead
			\multicolumn{2}{l}{\textbf{Indexes:}}\\
			$c$ & index into types of parts/components, $c=1,2,3,...,C$ \\
			$p$ & index into types of PCBs, $p=1,2,3,...,P$\\
			$t$ & index into trolleys for loading components, $t=1,2,3,...,T$\\
			\hline
			\multicolumn{2}{l}{\textbf{Parameters:}}\\
			$C$ & total number of unique components\\
			$P$ & total number of unique PCBs\\
			$T$ & maximum number of trolleys allowed to manufacture $P$ PCBs\\
			$N$ & trolley capacity, i.e., number of slots on the trolley\\
			$l_p$ & maximum trolley limit for PCB $p$ (depends on capacity of assembly line)\\
			$s_c$ & number of slots needed by component $c$ on a trolley\\
			$S_p$ & set of components needed to build PCB $p$\\
			$\mathbb{M}$ & a very large number\\
			%		%		$\epsilon$ & a very small number\\
			\hline
			\multicolumn{2}{l}{\textbf{Variable:}}\\
			$x_{ct}$ & 1 if component $c$ is assigned to trolley $t$ else 0\\
			\hline
			\multicolumn{2}{l}{\textbf{Auxiliary variables:}}\\
			$y_t$ & 1 if trolley $t$ is used else 0\\
			$z_{pt}$ & 1 if PCB $p$ needs trolley $t$ else 0\\
			\hline
			%	\end{longtable}
	\end{tabular}
\end{table}

\textbf{Objective:} The objective of the trolley optimisation problem is given below as the minimisation of the sum of all trolleys needed, where $y_t$ is an indicator variable which tells if a trolley $t$ is needed or not, i.e., 1, if needed otherwise 0, and $T$, is maximum number of trolleys allowed to solve the problem. This problem is a kind of assignment problem that decides which component $c$ is loaded onto which trolleys $t$ to minimise the use of trolleys.
\begin{equation}
	\label{eq_obj} \min \quad \sum_{t=1}^T y_t
\end{equation}

\textbf{Constraints:} We extend the BPP to formulate the TOP so the problem includes constraints from BPP (\cite{perron2011operations}), in addition to the problem specific constraints, as discussed below.

(a) Each component $c$ can be on only one trolley $t$ in an assembly line, as given by the following constraint, where $x_{ct}$ is an indicator variable which tells if a component $c$ is assigned to a trolley $t$ or not and takes a value 1 if assigned otherwise takes 0.
\begin{equation}
	\label{eq_c1}
	\sum_{t=1}^T x_{ct} = 1, \quad \forall c.
\end{equation}

(b) Each trolley $t$ has a fixed capacity of $N$ so if the trolley is used then the maximum load on it should not exceed the trolley capacity. So, assuming $s_c$ represents the number of slots needed by component $c$ then the trolley capacity constraint is given below.
\begin{equation}
	\label{eq_c2}
	\sum_{c=1}^C s_c \times x_{ct}  \le N \times y_t, \quad \forall t.
\end{equation}

(c) When a PCB is being built on an assembly line, we cannot add or remove trolleys due to the limited capacity of the assembly line, which is determined by the number of CAP machines in the assembly line. Therefore, there is a maximum limit on the number of trolleys required by each PCB. That means if an assembly line can take a maximum of 16 trolleys then all PCBs built on that line should have a maximum trolley limit of 16. Assuming, $z_{pt}$ is an indicator variable which is 1 if trolley $t$ is needed to build PCB $p$ and $l_p$ is the maximum trolley limit for PCB $p$ then the constraint for maximum trolley limit for the PCB is given below.

\begin{equation}
	\label{eq_c3}
	\sum_{t=1}^T z_{pt} \le l_p, \quad \forall p.
\end{equation}

(d) To ensure that $z_{pt}$ is 1 when trolley $t$ is needed to build PCB $p$ else 0, following constraints are added. This constraint checks if any of the components needed by PCB $p$ is loaded onto trolley $t$ and introduces the following constraints to deal with the `if' statement.
\begin{equation}
	\begin{split}
		\label{eq_c4}
		\sum_{c \in S_p} x_{ct} - \mathbb{M} \times z_{pt} \le 0, \quad \forall t, p,\\
		z_{pt} - \sum_{c \in S_p} x_{ct} \le 0, \quad \forall t, p.
	\end{split}
\end{equation}

The objective function in (\ref{eq_obj}) and constraints (\ref{eq_c1}) and (\ref{eq_c2}) are similar to the BPP (\cite{perron2011operations}). Constraints (\ref{eq_c3}) and (\ref{eq_c4}) extend the BPP to adapt it to the trolley optimisation problem.

The objective function, as well as the constraints, are linear so making the trolley optimisation problem an MILP.

\subsection{Solution approach}
\label{subsec_solution_approach}
Exact optimisation solvers are typically preferred for addressing optimisation problems, particularly when they can deliver optimal solutions within a reasonable timeframe (\cite{chauhan2023exploitation}). Since our problem falls into the medium-size category and exact optimization solvers can efficiently yield solutions, we chose this approach to solve the TOP.

Moreover, since our problem is an MILP, we apply mathematical programming solution techniques as well as constraint programming to solve the TOP. The intuition to use constraint programming was that constraint programming can perform better than MILP techniques when variables are binary, which is the case for the TOP. The experiments also support our intuition as constraint programming takes 30 minutes to find the optimal solution of the TOP as compared with 75 minutes taken by mathematical programming techniques. So, the final results are reported using the constraint programming in Subsec.~\ref{subsec_results}.

\subsection{Dataset and experimental settings}
\label{subsec_dataset}
We present the case study of a company having an assembly shop with multiple assembly lines, which builds a variety of PCBs in low-volume and high-mix on each assembly line. The company needs to load components required to build all PCBs on the assembly line, onto trolleys and stackers. Here, two industrial datasets from the company, corresponding to two assembly lines with two and three CAP machines, respectively, are presented, whose statistics are given in Table~\ref{tab_datasets}. The two datasets, named as `A' and `B', have 80 and 62 PCBs, each of which is built from a subset of 579 and 930 components, respectively. Most of the components, requiring up to five slots are placed onto the trolleys and bigger components are placed on the stackers. The distribution of the size of components for both datasets is given in Fig.~\ref{fig_distri_slots}. From the figure, it is clear that most of the components need one slot and a small number of components are loaded onto the stackers. Since PCBs in dataset A are built based on an assembly line equipped with two CAP machines, the assembly line's capacity (referred to as the maximum trolley limit in the model development) is either 14 trolleys and 2 stackers, or 16 trolleys if stackers are not needed. Consequently, for any PCB in dataset A, the trolley/stacker requirements must not exceed the assembly line's capacity. Similarly, since the PCBs in dataset B are built in an assembly line with three CAP machines so for any PCB in dataset B, the trolley/stacker requirements must not exceed 22 trolleys and 2 stackers or 24 trolleys. We set the maximum number of allowed trolleys equal to the solution obtained using the manual approach, i.e., $T=28$ and 50 for datasets A and B, respectively. This value is very important because a very small number can make the problem infeasible and a very large number will increase the number of variables, thereby the problem will become computationally expensive.

\begin{table}[htb!]
	\centering
	\caption{Details of two industrial datasets}
	\label{tab_datasets}
	\begin{tabular}{|c|c|ccc|c|c|}
		\hline
		\multirow{2}{*}{\textbf{Dataset}} & \multirow{2}{*}{\textbf{\#PCBs}} & \multicolumn{3}{c|}{\textbf{\#Components}}                                            & \multirow{2}{*}{\textbf{\#Variables}} & \multirow{2}{*}{\textbf{\#Constraints}} \\ \cline{3-5}
                                  &                                  & \multicolumn{1}{c|}{Trolley} & \multicolumn{1}{c|}{Stacker} & \textbf{Total} &                       &                              \\ \hline
\textbf{A}                        & 80                               & \multicolumn{1}{c|}{537}     & \multicolumn{1}{c|}{42}      & 579            &                 17,304      &   5,125                           \\ \hline
\textbf{B}                        & 62                               & \multicolumn{1}{c|}{875}     & \multicolumn{1}{c|}{55}      & 930            &                 46,900      &  7,187                            \\ \hline
	\end{tabular}
\end{table}

The total number of variables in the problem is the sum of the primary variables and auxiliary variables and can be calculated as $x_{ct}+y_t+z_{pt},\; \forall c, t, p$, i.e., $C \times T + T + P \times T$. The total number of constraints can be calculated as $C + T + P + 2\times T \times P $. For example, for dataset A, the number of variables for the assignment of components to trolleys is $537 \times 28 + 28 + 80 \times 28 = 17,304$, and the number of constraints is $537 + 28 + 80 + 2\times 28 \times 80 = 5,125$. Similarly, we can calculate the number of variables and the constraints for the assignment of components to the stacker as well as for dataset B which are provided in Table~\ref{tab_datasets}.

\begin{figure}[htb!]
	\centering
	\includegraphics[width=0.6\textwidth]{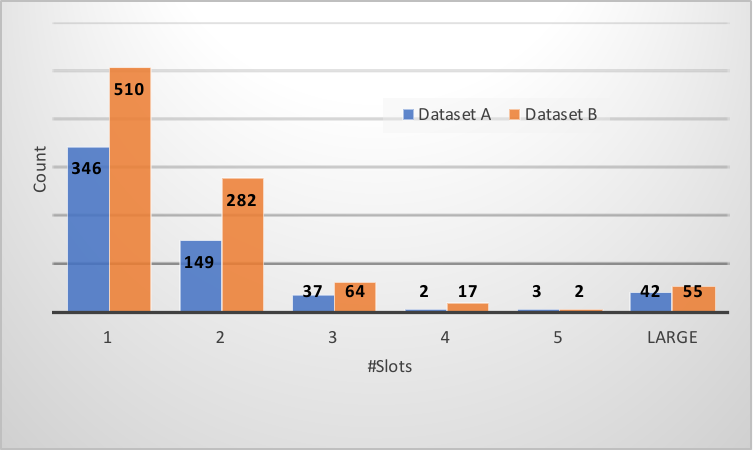}
	\caption{Distribution of component sizes in terms of number of slots required on a trolley/stacker.}
	\label{fig_distri_slots}
\end{figure}

The experiments are coded in Python programming language and use Google OR tools (\cite{perron2011operations}) to solve the TOP. The Google OR-tools library is used because it provides mathematical programming and constraint programming solvers to solve MILP and is freely available. All the experiments are executed on a MacBook Pro (16GB RAM, 256 SSD, 2.5 GHz Dual-Core Intel Core i7).

\subsection{Comparative study}
\label{subsec_results}
We compare the results of the proposed model to the current practice, i.e., the manual process used by the company to solve the TOP. The current practice suffers mainly from two issues. First and the most important issue is the long time of eight weeks to prepare the trolley loading setup which causes the assembly shop to lose the flexibility to introduce new PCBs in the build process and to respond to unprecedented situations, like COVID-19. Second, the manual process is not optimal and requires more trolleys which are difficult to manage.

\begin{figure*}[htb!]
	\centering
	\begin{subfigure}{0.4\textwidth}
		\centering
		\includegraphics[width=\textwidth]{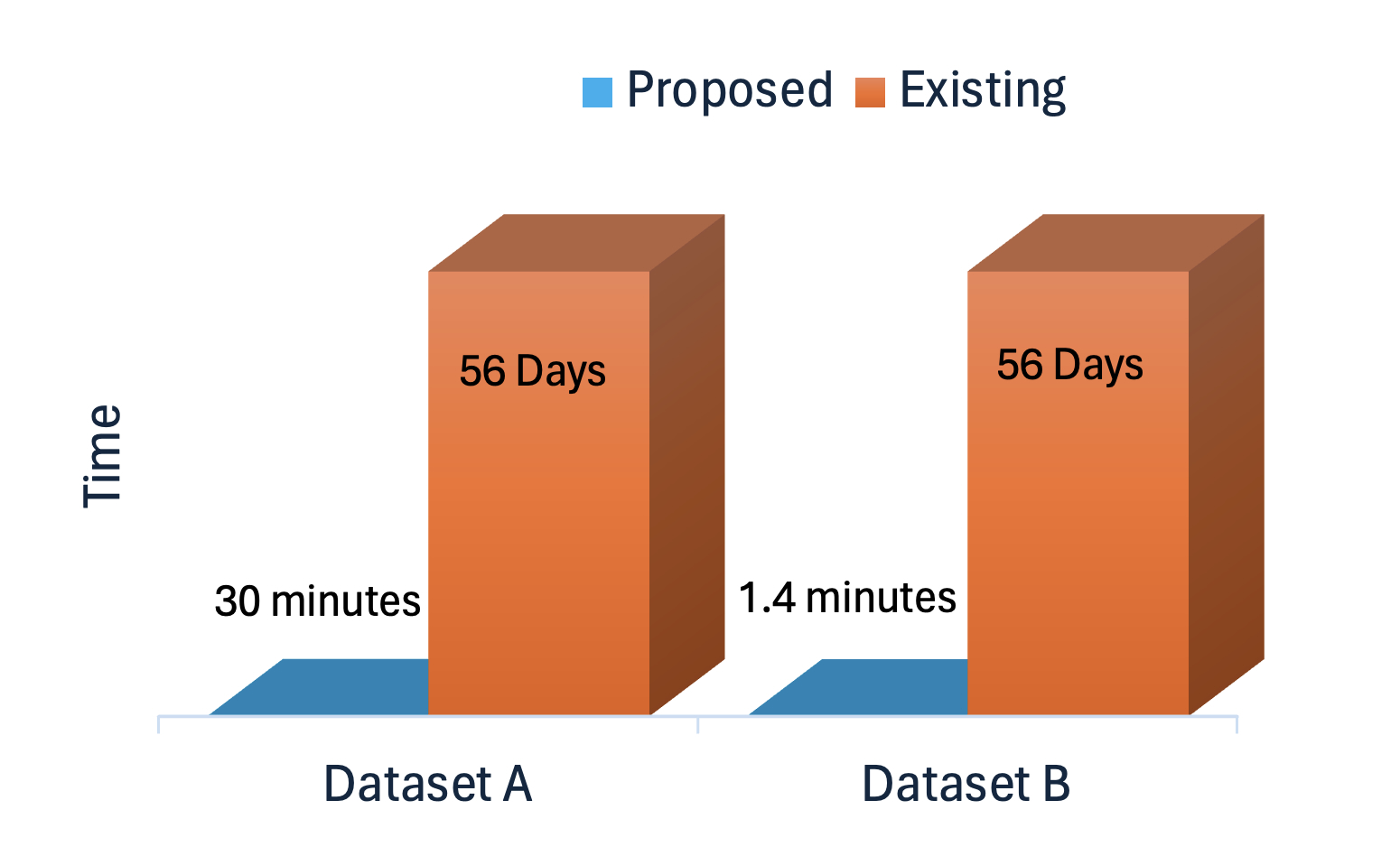}
		\caption{Time to solve the problem}
		\label{subfig_time}
	\end{subfigure}%
	~~
	\begin{subfigure}{0.4\textwidth}
		\centering
		\includegraphics[width=\textwidth]{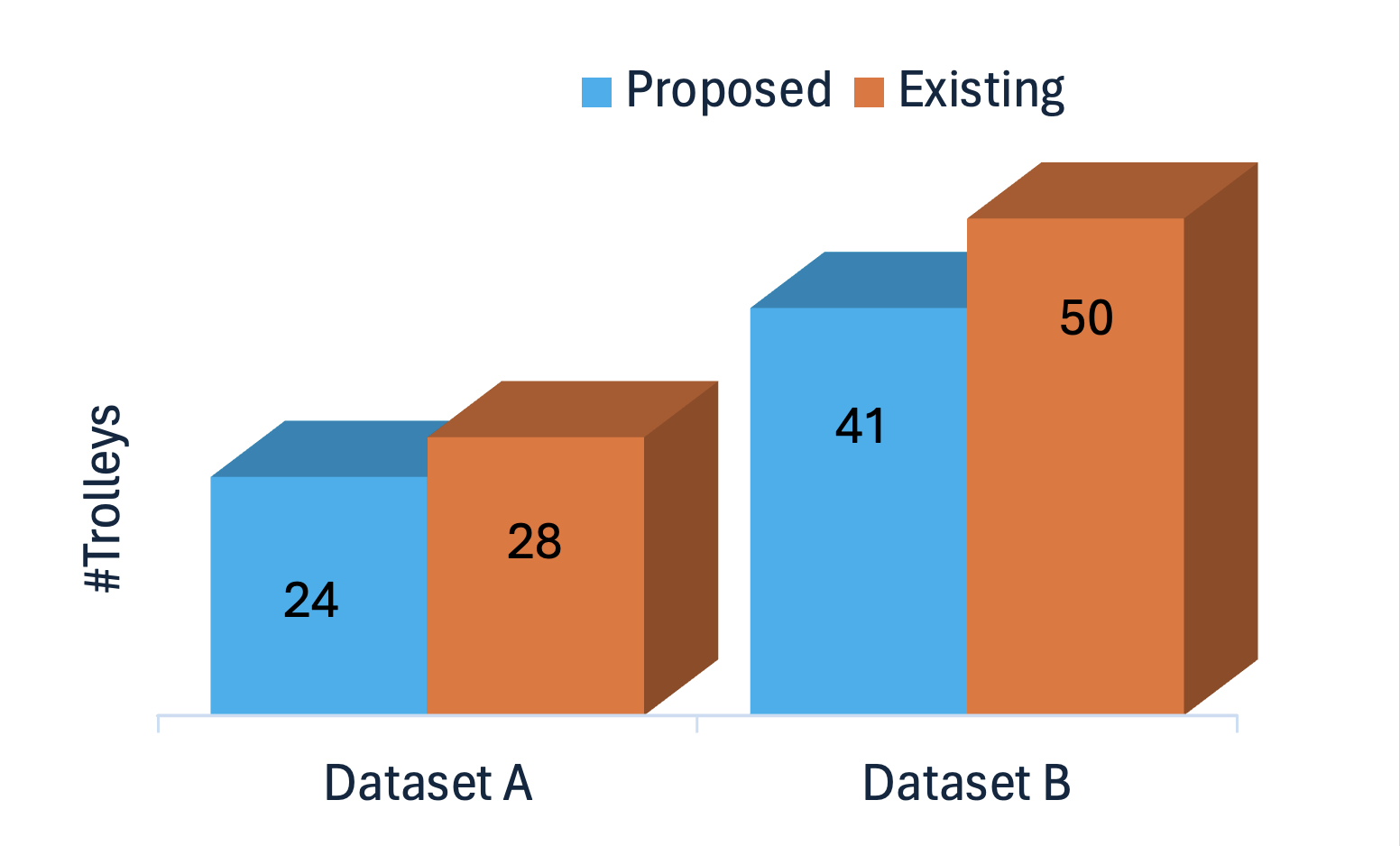}
		\caption{Number of trolleys needed}
		\label{subfig_no_trolley}
	\end{subfigure}	
	\caption{Comparative study of the existing manual method versus the proposed automated TOP method}
	\label{fig_results}
\end{figure*}

The comparative study is presented in Fig.~\ref{fig_results}, which compares the time to solve the problem and the number of trolleys needed to load the given number of components to build a given set of PCBs. As evident from Subfig.~\ref{subfig_time}, the proposed model exhibits significant improvement over the manual process. It can efficiently solve the problem in less than one hour, whereas the manual process requires 56 days for both datasets. It is noted that the model takes longer on dataset A than B, despite having a larger number of components in dataset B. This is because of a more relaxed capacity constraint for the assembly line for dataset B as each PCB can take 22 or 24 trolleys compared with 14 or 16 trolleys taken by each PCB in dataset A (please refer to Subsec.~\ref{subsec_sensitivity} for sensitivity analysis on maximum trolley limit). In addition to that, as it is clear from the Subfig.~\ref{subfig_no_trolley}, the proposed model also loads the given components onto a lesser and optimal number of trolleys as compared with the number of trolleys taken in the manual process to load the same set of components. Therefore, the proposed model effectively automates the TOP, addressing the associated issues of prolonged processing time and inflexibility on the shop floor.

\begin{figure*}[htb!]
	\centering
	\begin{subfigure}{0.5\textwidth}
		\centering
		\includegraphics[width=\textwidth]{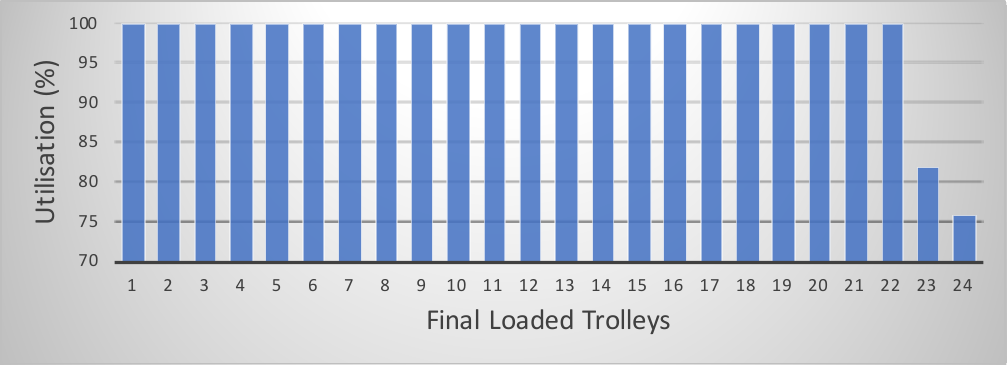}
%			\caption{Time to solve the problem}
		\label{fig_final_trolleys1}
	\end{subfigure}%
	~~
	\begin{subfigure}{0.5\textwidth}			
		\centering
		\includegraphics[width=\textwidth]{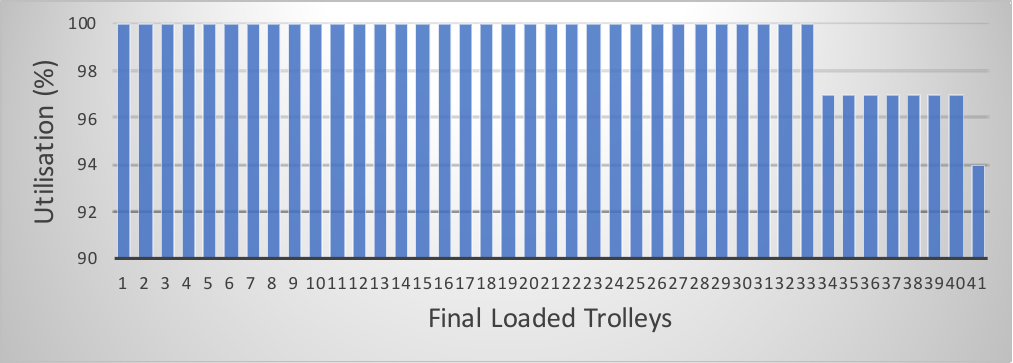}
%			\caption{Number of trolleys needed}
		\label{fig_final_trolleys2}
	\end{subfigure}	
	\caption{Utilisation status of selected trolleys on (a) Dataset A and (b) Dataset B}
	\label{fig_final_trolleys}
\end{figure*}

The utilisation of the capacity of each of the selected trolleys is represented in Fig.~\ref{fig_final_trolleys}. As it is clear from the figure, the model can utilise most of the capacity of the trolleys for both datasets as only a few trolleys have some unused spaces. However, the utilisation of different trolleys is dependent on the problem. In both datasets, the PCBs belong to certain product families with similar component requirements, likely contributing to the high utilisation of trolleys in the final solution.

The problem of component assignment to stackers is much smaller as compared to the problem of component assignment to trolleys and the model takes a few seconds to solve the problem and loads the components onto two stackers for both datasets, which is also equal to the number of stackers taken in the manual process. Thus, overall, both the assignment problems, i.e., the TOP can be solved in less than one hour and takes 24 trolleys and 2 stackers for dataset A, and 41 trolleys and 2 stackers for dataset B.

The proposed model has been successfully deployed in the company, leading to substantial cost reductions through the automation of the TOP. Additionally, it offers flexibility in rescheduling, introducing new products, and addressing unforeseen circumstances such as the COVID-19 pandemic in the build process.

\subsection{Sensitivity analysis}
\label{subsec_sensitivity}
In this subsection, we discuss the sensitivity of the model to the `maximum trolley limit' constraint for each PCB on dataset A. We solve the model for three more settings, in addition to the case study. For the simplicity of the experiments, first, we set the same limits for all the PCBs as 16, 20 and 22 trolleys. From the experiments, we find that there is no change in the use of the number of trolleys used to load the components although their specific loading does change. However, there is a large change in the time to solve the problem as the model takes 30 minutes (m), 2 m, 0.35 m, 0.18 and 0.18 m for the case study setting, 16, 18, 20 and 22 trolleys, respectively. This large difference in time to solve the problem is because an increase in the maximum trolley limit makes the problem less and less restrictive until the maximum trolley limit has no effect. Since 20 and 22 trolleys take almost the same time so the problem has no effect after 20 trolleys. Thus, we find that the problem is sensitive to the maximum trolley limit constraint and is one of the reasons for the better performance of the model on dataset B.

\section{Trolley optimisation is an NP-complete problem}
\label{sec_np_complete}
Here, we prove the NP-completeness of the TOP. Our proof is motivated by \cite{BPP}. We consider only trolleys and ignore stackers because as discussed above, the problem can be decomposed into smaller sub-problems and a single model can solve the assignment of trolleys as well as stackers. First, we define the decision version of the problem.

\begin{definition}[Trolley optimisation]
We are given a set $\sigma = \lbrace 1,2,..,c,..,C\rbrace$ of PCB components where each component $c$ has size $s_c$, a set $\tau = \lbrace 1,2,..,t,..,T\rbrace$ of trolleys of a fixed size of $N$ and a set $\rho = \lbrace 1,2,...,p,...,P \rbrace$ of PCBs where each PCB $p$ needs a subset of components $\sigma_p \subseteq \sigma$ to build the PCB such that the components in $\sigma_p$ can be loaded in maximum of $v_p$ trolleys. Decide if PCB components can be loaded in less than or equal to $k \in \mathbb{Z^+}$ number of trolleys while obeying the constraint on a maximum number of trolleys needed for each PCB.
\end{definition}

Next, we provide the proof for NP-completeness of the TOP.
\begin{theorem}
\label{np_complete}
The trolley optimisation is an NP-complete problem.
\end{theorem}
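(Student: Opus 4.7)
The plan is to establish NP-completeness by the standard two-part argument: show that the decision version of the TOP lies in NP, and then reduce a known NP-complete problem to it. For the reduction I would use the decision version of the classical one-dimensional bin packing problem (BPP), since the TOP was introduced in this paper as a constrained extension of the BPP and the only added ingredient is the per-PCB trolley limit.

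First I would verify membership in NP. A certificate is simply the assignment matrix $x_{ct}$. Given such a certificate together with the derived quantities $y_t$ and $z_{pt}$, one can check in time polynomial in $C$, $T$ and $P$ that (i) each component is assigned to exactly one trolley (constraint~\ref{eq_c1}), (ii) no trolley exceeds capacity $N$ (constraint~\ref{eq_c2}), (iii) each PCB uses at most $v_p$ trolleys (constraints~\ref{eq_c3}--\ref{eq_c4}), and (iv) the total number of used trolleys is at most $k$. Each check is a straightforward summation, so verification is polynomial and TOP is in NP.

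Next I would prove NP-hardness by a polynomial-time reduction from BPP. Given a BPP instance consisting of items $\{1,\dots,C\}$ with sizes $s_c$, bin capacity $N$ and target $k$, I would construct the TOP instance with the \emph{same} components and sizes, with trolley capacity $N$, with $T = k$ allowed trolleys, and with a single PCB ($P = 1$) whose required set is $\sigma_1 = \sigma$ and whose per-PCB trolley limit is $v_1 = k$. This construction is clearly computable in linear time. The equivalence is immediate: a feasible BPP packing into at most $k$ bins, viewed as a TOP assignment, satisfies constraints (\ref{eq_c1})--(\ref{eq_c2}) by construction, and since all used trolleys belong to the unique PCB whose limit is $v_1 = k \ge $ number of used trolleys, constraints (\ref{eq_c3})--(\ref{eq_c4}) are satisfied as well. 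Conversely, any feasible TOP solution with at most $k$ trolleys is directly a BPP packing into at most $k$ bins since the extra constraints are vacuous in this reduced instance. Hence the BPP instance is a yes-instance iff the constructed TOP instance is a yes-instance.

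The two parts combine to give NP-completeness. There is no real obstacle here: the main point to be careful about is choosing the reduction so that the PCB-level constraint (\ref{eq_c3}) and its linking constraints (\ref{eq_c4}) do not interfere with the embedded BPP structure, which is why I set $P = 1$ with a single PCB that demands all components and grant it a slack limit of $v_1 = k$. One could alternatively set $v_1 = T$, but using $v_1 = k$ makes the correspondence with the BPP target explicit. The rest is bookkeeping.
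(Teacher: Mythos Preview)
Your proposal is correct and follows essentially the same approach as the paper: reduce the decision version of BPP to TOP by making the per-PCB trolley-limit constraint vacuous (the paper sets $v_p=k$ for every PCB, while you equivalently use a single PCB with $\sigma_1=\sigma$ and $v_1=k$). Your write-up is in fact more complete, since you also verify membership in NP, which the paper's proof omits.
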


\begin{proof}
A problem `A' is an NP-complete if there exists a known NP-complete problem `B', such that `B' can be reduced to `A'. We know that the bin packing problem is a well-known NP-complete problem [\cite{BPP}]. So, to prove the NP-completeness of the TOP, we reduce the BPP to the TOP. Let's recall the decision version of the BPP: We are given a set $\sigma = \lbrace 1,2,..,c,..,C\rbrace$ of items where each item $c$ has size $s_c$ and a set $\tau = \lbrace 1,2,..,t,..,T\rbrace$ of bins of a fixed size of $N$. Decide if items can be packed in less than or equal to $k$ number of bins. Now suppose $v_p = k$, i.e., each PCB $p$ can use all the available trolleys, and then the TOP becomes equivalent to the BPP. Thus, a set of components $\sigma$ required to build a set of PCBs $\rho$ can be loaded on $k$ trolleys \textit{if and only if} components can be packed in $k \in \mathbb{Z^+}$ trolleys.
\end{proof}  
Hence, the theorem is proved.

\section{Conclusion and discussions}
\label{sec_conclusion}
PCB manufacturing is an important activity due to the wide use of PCBs in all electrical and electronic equipment, like radio, television, smartphones, computers, aircraft engines and trains etc. Recently, there has been a growing need for PCBs due to the advancements in technology, like digitalisation and adoption of Internet of Things (IoT) devices etc. So, the manufacturers use automated and complex manufacturing systems, like CAP machines, which need complex setup operations.

In this paper, we introduce a novel NP-complete problem of trolley optimisation, i.e., the problem of assignment of a given set of PCB components to trolleys and stackers for the production of a set of PCBs in an assembly line. The earlier research mainly focused on other PCB planning problems, like PCB allocations, component placement sequences etc. and assumed direct assignment of components to CAP machines. The direct assignments are not practical, especially for a large variety of PCBs, as industries, generally, use trolleys for easy management of components and for ease of changing trolleys between different PCBs. So, this work automates the TOP to build a given set of PCBs which mainly focuses on reducing the time to prepare trolleys and on reducing the number of required trolleys.

We present a novel extension of the BPP to formulate the TOP. Similar to bin packing, the TOP finds a minimum number of trolleys/stackers (equivalent to bins) of common capacity to load/pack a given set of components (equivalent to items) of different sizes/weights to build a set of PCBs in an assembly line. The TOP shares the objective function and constraints of the BPP but adds additional constraints to ensure that each PCB is feasible on the assembly line, i.e., the problem limits the maximum number of trolleys to be used to load components of a PCB otherwise either the PCB could not be built on the assembly line or need to replace trolleys during the building process. Additionally, we exploit the problem structure to decompose the TOP into two smaller, identical and independent problems, by pre-computing the dependency between them. So, a single and a smaller MILP model is developed to solve the TOP which is solved using exact optimisation methods. We also proved that the TOP is an NP-complete problem.

We present a case study of an aerospace company which has an assembly shop to meet its PCB needs across all programmes and operates in a low-volume and high-mix setting. Due to the complex nature of the PCB planning problem and due to the customised needs of the company, all the setups are performed manually by experienced managers. The manual setup of trolley loading for each assembly line takes eight weeks and results in loss of flexibility on the shop floor. The proposed MILP model successfully solves the TOP in less than an hour and is deployed in the company. The model resulted in significant cost reductions through automation of the TOP, in addition to introducing flexibility in the building process.

From the managerial perspective, the TOP model not only saves money by automating the manual process but also makes the task easy to manage with fewer trolleys and brings flexibility in the planning of different builds and in satisfying different demands. This enables the managers to easily introduce new products, due to the ability of the model to generate faster automated setups, adapt to changing demands and to quickly respond to unprecedented situations, like COVID-19.

As discussed earlier, PCB planning is a multi-level optimisation problem with several interdependent sub-problems, which are complex due to the NP-hard nature of the problem and the scale of the problem. So, in general, heuristic-based methods are used to solve them either independently or some of them. In this work, we have used the exact optimisation method to solve one of the sub-problems which obviously won't result in an optimal solution for the integrated PCB planning problem, and presents potential avenues for future research. Our future work will study the TOP with other PCB planning sub-problems, especially the scheduling of PCBs in an assembly shop with multiple assembly lines.

\section*{Acknowledgement}
This research was funded by Aerospace Technology Institute and Innovate UK, the UK’s innovation funding agency, through the ``Digitally Optimised Through-Life Engineering Services" project (113174). We are also grateful to the anonymous reviewers for their constructive comments, which greatly helped to improve the clarity of our work.

\section*{Statements and Declarations}
\subsection*{Disclosure statement}
The authors report there are no competing interests to declare.

\subsection*{Data availability statement}
Due to the commercial nature of this research, the supporting data is unavailable.

\subsection*{CRediT authorship contribution statement}
Vinod Kumar Chauhan: Conceptualisation, Methodology, Software, Writing – original draft. Mark Bass: Conceptualisation. Ajith Kumar Parlikad: Supervision, Funding acquisition. Alexandra Brintrup: Conceptualisation, Supervision, Funding acquisition. All authors reviewed the final draft.
	
%% BibTeX users please use one of
% \bibliographystyle{plain}  % basic style, author-year citations
%	\bibliographystyle{spmpsci}  % mathematics and physical sciences
%	\bibliographystyle{spphys}  % APS-like style for physics
\bibliography{pcb} % name your BibTeX data base

\begin{thebibliography}{24}
\providecommand{\natexlab}[1]{#1}
\providecommand{\url}[1]{\texttt{#1}}
\expandafter\ifx\csname urlstyle\endcsname\relax
  \providecommand{\doi}[1]{doi: #1}\else
  \providecommand{\doi}{doi: \begingroup \urlstyle{rm}\Url}\fi

\bibitem[BPP()]{BPP}
Chapter 8 bin packing.
\newblock
  \url{https://ac.informatik.uni-freiburg.de/lak_teaching/ws11_12/combopt/notes/bin_packing.pdf}.
\newblock Accessed: August 20, 2023.

\bibitem[Ahmadi et~al.(1988)Ahmadi, Grotzinger, and
  Johnson]{ahmadi1988component}
Javad Ahmadi, Stephen Grotzinger, and Dennis Johnson.
\newblock Component allocation and partitioning for a dual delivery placement
  machine.
\newblock \emph{Operations Research}, 36\penalty0 (2):\penalty0 176--191, 1988.

\bibitem[Ayd{\i}n et~al.(2020)Ayd{\i}n, Muter, and Birbil]{aydin2020multi}
Nur{\c{s}}en Ayd{\i}n, {\.I}brahim Muter, and {\c{S}}~{\.I}lker Birbil.
\newblock Multi-objective temporal bin packing problem: An application in cloud
  computing.
\newblock \emph{Computers \& Operations Research}, 121:\penalty0 104959, 2020.

\bibitem[Burke et~al.(2013)Burke, Gendreau, Hyde, Kendall, Ochoa, {\"O}zcan,
  and Qu]{burke2013hyper}
Edmund~K Burke, Michel Gendreau, Matthew Hyde, Graham Kendall, Gabriela Ochoa,
  Ender {\"O}zcan, and Rong Qu.
\newblock Hyper-heuristics: A survey of the state of the art.
\newblock \emph{Journal of the Operational Research Society}, 64\penalty0
  (12):\penalty0 1695--1724, 2013.

\bibitem[Castellani et~al.(2019)Castellani, Otri, and
  Pham]{castellani2019printed}
Marco Castellani, Sameh Otri, and Duc~Truong Pham.
\newblock Printed circuit board assembly time minimisation using a novel bees
  algorithm.
\newblock \emph{Computers \& Industrial Engineering}, 133:\penalty0 186--194,
  2019.

\bibitem[Chauhan et~al.(2023)Chauhan, Alomari, Arney, Parlikad, and
  Brintrup]{chauhan2023exploitation}
Vinod~Kumar Chauhan, Muhannad Alomari, James Arney, Ajith~Kumar Parlikad, and
  Alexandra Brintrup.
\newblock Exploitation of material consolidation trade-offs in multi-tier
  complex supply networks.
\newblock \emph{Supply Chain Analytics}, 4:\penalty0 100050, 2023.

\bibitem[Delorme et~al.(2016)Delorme, Iori, and Martello]{delorme2016bin}
Maxence Delorme, Manuel Iori, and Silvano Martello.
\newblock Bin packing and cutting stock problems: Mathematical models and exact
  algorithms.
\newblock \emph{European Journal of Operational Research}, 255\penalty0
  (1):\penalty0 1--20, 2016.

\bibitem[Delorme et~al.(2018)Delorme, Iori, and Martello]{delorme2018bpplib}
Maxence Delorme, Manuel Iori, and Silvano Martello.
\newblock Bpplib: a library for bin packing and cutting stock problems.
\newblock \emph{Optimization Letters}, 12\penalty0 (2):\penalty0 235--250,
  2018.

\bibitem[Drezner and Nof(1984)]{drezner1984optimizing}
Zvi Drezner and Shimon~Y Nof.
\newblock On optimizing bin picking and insertion plans for assembly robots.
\newblock \emph{IIE transactions}, 16\penalty0 (3):\penalty0 262--270, 1984.

\bibitem[Ellis and Bhoja(2002)]{ellis2002optimization}
Kimberly~P Ellis and Sudeer Bhoja.
\newblock Optimization of the assignment of circuit cards to assembly lines in
  electronics assembly.
\newblock \emph{International journal of production research}, 40\penalty0
  (11):\penalty0 2609--2631, 2002.

\bibitem[Gao et~al.(2021)Gao, Li, Yu, and Qiu]{gao2021hierarchical}
Huijun Gao, Zhengkai Li, Xinghu Yu, and Jianbin Qiu.
\newblock Hierarchical multiobjective heuristic for pcb assembly optimization
  in a beam-head surface mounter.
\newblock \emph{IEEE Transactions on Cybernetics}, 2021.

\bibitem[Ji and Wan(2001)]{ji2001planning}
Ping Ji and YF~Wan.
\newblock Planning for printed circuit board assembly: The state-of-the-art
  review.
\newblock \emph{International Journal of Computer Applications in Technology},
  14\penalty0 (4-6):\penalty0 136--144, 2001.

\bibitem[Kantorovich(1960)]{kantorovich1960mathematical}
Leonid~V Kantorovich.
\newblock Mathematical methods of organizing and planning production.
\newblock \emph{Management science}, 6\penalty0 (4):\penalty0 366--422, 1960.

\bibitem[Koskinen et~al.(2020)Koskinen, Raduly-Baka, Johnsson, and
  Nevalainen]{koskinen2020rolling}
Jani Koskinen, Csaba Raduly-Baka, Mika Johnsson, and Olli~S Nevalainen.
\newblock Rolling horizon production scheduling of multi-model pcbs for several
  assembly lines.
\newblock \emph{International Journal of Production Research}, 58\penalty0
  (4):\penalty0 1052--1073, 2020.

\bibitem[Leao et~al.(2020)Leao, Toledo, Oliveira, Carravilla, and
  Alvarez-Vald{\'e}s]{leao2020irregular}
Aline~AS Leao, Franklina~MB Toledo, Jos{\'e}~Fernando Oliveira,
  Maria~Ant{\'o}nia Carravilla, and Ram{\'o}n Alvarez-Vald{\'e}s.
\newblock Irregular packing problems: A review of mathematical models.
\newblock \emph{European Journal of Operational Research}, 282\penalty0
  (3):\penalty0 803--822, 2020.

\bibitem[Li et~al.(2018)Li, He, and Yoon]{li2018clustering}
Debiao Li, Tian He, and Sang~Won Yoon.
\newblock Clustering-based heuristic to optimize nozzle and feeder assignments
  for collect-and-place assembly.
\newblock \emph{IEEE Transactions on automation science and engineering},
  16\penalty0 (2):\penalty0 755--766, 2018.

\bibitem[Lin and Chang(2015)]{lin2015demand}
Yi-Kuei Lin and Ping-Chen Chang.
\newblock Demand satisfaction and decision-making for a pcb manufacturing
  system with production lines in parallel.
\newblock \emph{International Journal of Production Research}, 53\penalty0
  (11):\penalty0 3193--3206, 2015.

\bibitem[Martello et~al.(2000)Martello, Pisinger, and Vigo]{martello2000three}
Silvano Martello, David Pisinger, and Daniele Vigo.
\newblock The three-dimensional bin packing problem.
\newblock \emph{Operations research}, 48\penalty0 (2):\penalty0 256--267, 2000.

\bibitem[Mumtaz et~al.(2019)Mumtaz, Guan, Yue, Wang, Ullah, and
  Rauf]{mumtaz2019multi}
Jabir Mumtaz, Zailin Guan, Lei Yue, Zhengya Wang, Saif Ullah, and Mudassar
  Rauf.
\newblock Multi-level planning and scheduling for parallel pcb assembly lines
  using hybrid spider monkey optimization approach.
\newblock \emph{Ieee Access}, 7:\penalty0 18685--18700, 2019.

\bibitem[Perron(2011)]{perron2011operations}
Laurent Perron.
\newblock Operations research and constraint programming at google.
\newblock In \emph{International Conference on Principles and Practice of
  Constraint Programming}, pages 2--2. Springer, 2011.

\bibitem[Qin et~al.(2019)Qin, Zhuang, Liu, and Tang]{qin2019two}
Wei Qin, Zilong Zhuang, Yang Liu, and Ou~Tang.
\newblock A two-stage ant colony algorithm for hybrid flow shop scheduling with
  lot sizing and calendar constraints in printed circuit board assembly.
\newblock \emph{Computers \& Industrial Engineering}, 138:\penalty0 106115,
  2019.

\bibitem[Stayer(2011)]{stayer2011simulation}
Aaron Stayer.
\newblock \emph{Simulation based sequencing and batch size analysis on a
  high-mix low-volume production system using DMAIC}.
\newblock State University of New York at Binghamton, 2011.

\bibitem[W{\"a}scher et~al.(2007)W{\"a}scher, Hau{\ss}ner, and
  Schumann]{wascher2007improved}
Gerhard W{\"a}scher, Heike Hau{\ss}ner, and Holger Schumann.
\newblock An improved typology of cutting and packing problems.
\newblock \emph{European journal of operational research}, 183\penalty0
  (3):\penalty0 1109--1130, 2007.

\bibitem[Wu and Ji(2009)]{wu2009scheduling}
Yongzhong Wu and Ping Ji.
\newblock A scheduling problem for pcb assembly: a case with multiple lines.
\newblock \emph{The International Journal of Advanced Manufacturing
  Technology}, 43\penalty0 (11):\penalty0 1189--1201, 2009.

\end{thebibliography}

\end{document}